\documentclass{llncs}

\usepackage{amsmath,amssymb,microtype,xcolor,graphicx,enumerate}
\usepackage[all]{xy}
\usepackage{booktabs}

\renewcommand*{\ge}{\geqslant}
\newcommand{\pref}{\succcurlyeq}

\newcommand*{\citep}[1]{\cite{#1}}
\newcommand*{\citet}[1]{\cite{#1}}

\fboxsep 2pt
\newcommand{\lab}[1]{\colorbox{white}{\scriptsize #1}}
\newcommand{\labs}[1]{{\fboxsep 0.7pt \colorbox{white}{\scriptsize #1}}}

\let\doendproof\endproof
\renewcommand\endproof{~\hfill\qed\doendproof}

\pdfinfo{
/Title (Precise Complexity of the Core in Dichotomous and Additive Hedonic Games)
/Author (Dominik Peters)
}
\pagestyle{plain}
 \begin{document}
\title{Precise Complexity of the Core in \\ Dichotomous and Additive Hedonic Games}
\titlerunning{Core in Dichotomous and Additive Hedonic Games}
\author{Dominik Peters}
\institute{Department of Computer Science \\
University of Oxford \\ 
Oxford, UK\\
\email{dominik.peters@cs.ox.ac.uk}}
\maketitle
\begin{abstract}
	Hedonic games provide a general model of coalition formation, in which a set of agents is partitioned into coalitions, with each agent having preferences over which other players are in her coalition. We prove that with additively separable preferences, it is $\Sigma_2^p$-complete to decide whether a core- or strict-core-stable partition exists, extending a result of Woeginger (2013). Our result holds even if valuations are symmetric and non-zero only for a constant number of other agents. We also establish $\Sigma_2^p$-completeness of deciding non-emptiness of the strict core for hedonic games with dichotomous preferences. Such results establish that the core is much less tractable than solution concepts such as individual stability.
\end{abstract}

\section{Introduction}

Suppose agents wish to form coalitions, perhaps to jointly achieve some task or common goal, with the payoff to an agent depending on the make-up of the coalition the agent is joining. In many situations, it makes sense to model agents' preferences to only depend on the \emph{identity} of the players in a group. Such games are called \emph{hedonic games}, because agents in a hedonic game can be seen as deriving pleasure from each other's presence.

An agent in a hedonic game specifies a preference ordering over all sets (\textit{coalitions}) of agents. An outcome of the game is a partition of the agent set into disjoint coalitions. A player prefers those partitions in which she is part of a preferred coalition. The main focus of the literature on hedonic games is studying outcomes that are \textit{stable} \citep{Banerjee2001,Bogomolnaia2002}.

Of the many notions of stability discussed in the literature, the most prominent is the concept of the \emph{core}. A partition $\pi$ of the agent set is core-stable if there is no non-empty set $S$ of agents all of which strictly prefer $S$ to where they are in $\pi$. Intuitively, if the state of affairs were $\pi$, then the members of $S$ would find each other and decide to defect together because $\pi$ does not offer them enough utility. In this case, we say that $S$ \textit{blocks}~$\pi$. A related concept is the \textit{strict core}. The partition $\pi$ is strict-core-stable if there is no set $S$ of agents all of which \textit{weakly} prefer $S$ to $\pi$, but with at least one agent $i$ in $S$ having a strict preference in favour of $S$. Intuitively, $i$ can offer some of his profit of the deviation to the other players, inducing $S$ to deviate from $\pi$; and even if utility is entirely non-transferable, the players who are indifferent between $\pi$ and $S$ will be easily convinced to join the deviation.

Computationally speaking, hedonic games are large objects: every player is described by a preference ordering over exponentially many sets. However, there are attractive concise representations of such preferences, many of which have been studied in the computer science literature \citep{Ballester2004,Cechlarova2004,Elkind2009,Sung2010,Aziz2014a}. Given a concise representation (which usually is not universally expressive), it makes sense to pose the computational problem of finding a stable outcome given a hedonic game. Since many games do not admit stable outcomes, we can conveniently consider the decision problem whether one exists. This problem turns out to be NP-hard for most cases (see Peters and Elkind \citet{Peters2015} for a systematic study of representations inducing NP-hard \textsc{core-existence} problems, and Woeginger \citet{WoegingerSurvey} for a survey). In particular, the case of \textit{additively separable hedonic games} is an example of a class of succinctly represented hedonic games where it is hard to distinguish games admitting stable outcomes from those which do not \citep{Sung2010,ABS11c}. In this model, agents assign numeric values $v_i(j)$ to other players, and the utility of a coalition is the sum $\sum_{j\in S} v_i(j)$ of the values of the players in it.

In general, \textsc{core-existence}, the problem of deciding whether a given hedonic game admits a core-stable partition, is contained in the complexity class $\Sigma_2^p$. This is because the question under consideration is characterised by a single alternation of quantifiers: ``does there \textit{exist} a partition $\pi$ such that \textit{for all} coalitions $S$, $S$ does not block $\pi$?''. Here, we have assumed to be able to efficiently decide whether a given coalition $S$ blocks, which is trivially the case for all commonly considered representations. Alternatively, since $\Sigma_2^p = \text{NP}^{\text{NP}}$, we can see containment of \textsc{core-existence} through the following non-deterministic algorithm: guess a partition $\pi$, and use the NP-oracle to check whether $\pi$ is core-stable. 

In the case of additively separable games, it is coNP-complete to verify that a given partition $\pi$ is core-stable \citep{Dimitrov2006}. Thus, it is unlikely that the \textsc{core-existence} problem is contained in NP. Indeed, Woeginger \citet{Woeginger2013} proved that the problem is $\Sigma_2^p$-complete. The problem thus encapsulates the full hardness of the second level of the polynomial hierarchy, making it much harder to decide than NP-complete problems. (Woeginger likens NP-complete problems to ``rotten eggs'' while $\Sigma_2^p$-complete problems are at the level of ``radioactive thallium''.) 

Recent decades have shown impressive advances in general-purpose tools to handle NP-complete problems, such as through SAT and ILP solvers. Thus, for solution concepts such as Nash stability, whose \textsc{existence} problem is NP-complete, we should expect tractability in many practically relevant cases. On the other hand, because of $\Sigma_2^p$-hardness, finding a core-stable partition will likely require exponentially many calls to solvers. This suggests that the core will remain computationally elusive for some time to come.

In this paper, we extend Woeginger's result to the \emph{strict core}. In addition, our reduction---which works for both the core and the strict core---produces additive hedonic games that are \emph{symmetric} and \emph{sparse}, showing that imposing these additional restrictions do not lead to a drop in complexity. Before we turn to additive games, however, we first consider hedonic games with \emph{dichotomous preferences}, or \emph{Boolean hedonic games} \cite{Booleanhedonicgames2014}. In these games, players only distinguish between \emph{approved} and \emph{non-approved} coalitions. Preferences in these games can be specified by giving each agent a goal formula of propositional logic; the approved coalitions are those which satisfy the goal. We show that deciding the existence of a strict-core-stable outcome in a Boolean hedonic game is $\Sigma_2^p$-complete. For the framework of Aziz et al.\ \cite{Booleanhedonicgames2014}, this is interesting because it suggests that no polynomial-sized formula in their logic will be able to characterize strict-core-stable outcomes. Our hardness result for \emph{additive} games is then obtained by implementing the reduction for the dichotomous case using additive valuations.



\section{Preliminaries}

Given a finite set of agents $N$, a \textit{hedonic game} is a pair 
$G = \langle N, (\pref_i)_{i\in N}\rangle$, where each agent $i\in N$ possesses 
a complete and transitive preference relation $\pref_i$ over $\mathcal N_i = \{ S \subseteq N : i \in S \}$, the set of coalitions containing $i$. If $S \pref_i T$, we say that $i$ \textit{weakly prefers} $S$ to $T$. If $S \pref_i T$ but $T\not\pref_i S$, we say that the preference is \textit{strict} and write $S \succ_i T$. An agent is \textit{indifferent} between $S$ and $T$ whenever both $S\pref_i T$ and $T\pref_i S$.

An {\em outcome} of a hedonic game is a partition $\pi$ of $N$ into disjoint coalitions. We write $\pi(i)$ for the coalition of $\pi$ that contains $i$. If $\pi(i) \pref_i \{i\}$ for each $i\in N$, then $\pi$ is \textit{individually rational}. We say that a non-empty coalition $S\subseteq N$ \textit{blocks} $\pi$ if $S \succ_i \pi(i)$ for all $i\in S$. Thus, all members of a blocking coalition strictly prefer that coalition to where they are in $\pi$. The partition $\pi$ is \textit{core-stable} if there is no blocking coalition. A non-empty coalition $S\subseteq N$ \textit{weakly blocks} $\pi$ if $S \pref_i \pi(i)$ for all $i\in S$, and $S \succ_i \pi(i)$ for some $i\in S$. The partition $\pi$ is \textit{strict-core-stable} if there is no weakly blocking coalition. Clearly, if $\pi$ is strict-core-stable, then it is also core-stable. There are many other stability concepts for hedonic games that we do not consider here, see Aziz and Savani \citet{azizsavani} for a survey.

A hedonic game has \textit{dichotomous preferences}, and is called a \textit{Boolean hedonic game}, if for each agent $i\in N$ the coalitions $\mathcal N_i = \{ S \subseteq N : i \in S \}$ can be partitioned into \textit{approved} coalitions $\mathcal N_i^+$ and non-approved coalitions $\mathcal N_i^-$ such that $i$ strictly prefers approved coalitions to non-approved coalitions, but is indifferent within the two groups: so $S \succ_i T$ iff $S\in \mathcal N_i^+$ and $T\in \mathcal N_i^-$. We can specify a Boolean hedonic game by assigning to each agent $i\in N$ a formula $\phi_i$ ($i$'s \textit{goal}) of propositional logic with the propositional atoms given by the agent set $N$. A coalition $S\ni i$ is then approved by $i$ if and only if $S\models \phi_i$, that is the formula $\phi_i$ is satisfied by the truth assignment that sets variable $j\in N$ true iff $j\in S$. For example, if $i$'s goal formula $\phi_i$ is $(j \lor k) \land \lnot \ell$, then $i$ approves coalitions containing agent $j$ or $k$ as long as they do not contain agent $\ell$.

A hedonic game is \textit{additively separable}, and is called an \textit{additive hedonic game}, if there are valuation functions $v_i : N \to \mathbb Z$ for each agent $i\in N$ such that $S\pref_i T$ if and only if $\sum_{j\in S} v_i(j) \ge \sum_{j\in T} v_i(j)$. An additive hedonic game is \textit{symmetric} if $v_i(j) = v_j(i)$ for all $i,j\in N$.

The complexity class $\Sigma_2^p$, the \textit{second level of the polynomial hierarchy}, is  $\text{NP}\raisebox{-0.3pt}{$^{\text{NP}}$}$, the class of problems solvable by a polynomial-time non-deterministic Turing machine when given an NP-oracle. It can also be seen as the class of problems polynomial-time reducible to the language TRUE $\exists\forall$-QBF, which consists of true quantified Boolean formulas with only 1 alternation of quantifiers.

\section{Related Work}
Boolean hedonic games were introduced by Aziz et al.\ \citet{Booleanhedonicgames2014} who study them from a mainly logical point of view. (Notice that they use a different choice of propositional atoms---\textit{pairs} of agents---to allow future generalisations to games played on general coalition structures. Our choice is more natural for the hedonic setting.) In particular, Aziz et al.\ \citet{Booleanhedonicgames2014} show that every Boolean hedonic game admits a core-stable partition. Thus, only the complexity of the existence of the \textit{strict} core needs to be settled. Peters \citet{peters2016dichotomous} shows that \emph{finding} a core-stable partition, while it is guaranteed to exist, is FNP-hard.

Elkind and Wooldridge \citet{Elkind2009} introduce a representation formalism for hedonic games called \textit{hedonic coalition nets} (HC-nets), which can be seen as a powerful mixture of the additive and Boolean representations introduced above. Here, agents provide several goals $\phi_i$ weighted by real numbers, and an agent obtains as utility the sum of the weights of the formulas that are satisfied by the coalition. Thus, additively separable games are given by HC-nets in which every formula is just a single positive literal, and Boolean hedonic games are given by HC-nets in which every agent has only a single formula. It follows from a general result of Malizia et al.\ \citet{malizia2007infeasibility} that core-existence is $\Sigma_2^p$-complete to decide for games given by HC-nets (see also Elkind and Wooldridge \cite{Elkind2009}). We strengthen this to also apply to the strict core, and to hold even if either every agent only has a single formula, or if every formula is given by a single literal.

This paper follows the work of Woeginger \citet{Woeginger2013} who proves that deciding core-existence is $\Sigma_2^p$-complete for additive hedonic games. His reduction (from the same problem that we reduce from) does not work for the strict core, and in his survey \citep{WoegingerSurvey} he poses the problem to establish $\Sigma_2^p$-hardness for this solution concept. Doing this is the main contribution of this paper. We also strengthen Woeginger's \citep{Woeginger2013}
result for the core to hold even for symmetric valuations, and even if $v_i(j)$ is non-zero for at most 10 agents $j$, so that the game is ``sparse''. This closes off two avenues for potential avoidance of $\Sigma_2^p$-hardness.

Since a preprint of this paper appeared on arXiv, some additional $\Sigma_2^p$-hardness results for hedonic games have been obtained. Ohta et al.\ \citet{ohta2017friends} show that in hedonic games based on aversion to enemies (introduced by Dimitrov et al.\ \citet{Dimitrov2006}), if one allows `neutral' players, then deciding the existence of the (strict) core is $\Sigma_2^p$-hard. The games studied by Ohta et al.\ \citet{ohta2017friends} are in fact additively separable; thus, they imply $\Sigma_2^p$-hardness of the (strict) core in additive hedonic games. Hardness holds even if we restrict players' valuations to take at most three values, so that $v_i(j) \in \{ -n, 0, 1 \}$ for all $i,j\in N$. However, in contrast to our reduction, the games produced in their reduction are not symmetric and not sparse. Ohta et al.\ \citet{ohta2017friends} also show $\Sigma_2^p$-hardness for the strict core for games based on friend appreciation in the presence of neutral players. Aziz et al.\ \citet{aziz2017fractional} consider \emph{fractional hedonic games}, where players care about the \emph{average} value of their coalition partners, rather than the sum. They show that deciding the existence of the core is $\Sigma_2^p$-hard, even if valuations are symmetric and simple, so that $v_i(j) \in \{0,1\}$ for all $i,j\in N$. The reductions in both of these recent papers is from the complement of the \textsc{minmax clique} problem \citep{ko1995complexity}, which seems to be well-suited as a starting point for reductions for hedonic games. In this paper, like in Woeginger \citet{Woeginger2013}, we instead use a problem based on quantified Boolean formulas.

Peters \citet{peters2016graphical} introduces \emph{graphical hedonic games}, which are hedonic games equipped with an underlying graph on the agent set. This is a direct analogue of non-cooperative graphical games. In this language, our result for additive hedonic games implies that deciding the existence of the (strict) core is $\Sigma_2^p$-hard even for graphical hedonic games of bounded degree---that is, games whose underlying graph has a max-degree of at most $10$.

\section{A Useful Restricted Hard Problem}
Stockmeyer \citet{stockmeyer1976polynomial} proved that the following basic problem is $\Sigma_2^p$-complete: 

\noindent
\rule{\textwidth}{0.8pt}
\textbf{TRUE $\boldsymbol{\exists\forall}$-3DNF} \\[-6pt]
\rule{\textwidth}{0.5pt}
\textbf{Instance:} A quantified Boolean formula of form
\[
\setlength{\abovedisplayskip}{2pt}
\setlength{\belowdisplayskip}{3pt}
\exists x_1,\dots,x_m\:\: \forall y_1,\dots,y_n\:\: \phi(x_1,\dots,x_m,y_1,\dots,y_n),\] 
where $\phi$ is in disjunctive normal form with each disjunct containing 2 or 3 literals. \\[3pt]
\textbf{Question:} Is the formula true? \\[-6pt]
\rule{\textwidth}{0.8pt}

\noindent
Here we show that the problem remains $\Sigma_2^p$-complete even if we place restrictions on the number of occurrences of the variables, like is standard practice when proving NP-completeness. 

\noindent
\rule{\textwidth}{0.8pt}
\textbf{RESTRICTED TRUE $\boldsymbol{\exists\forall}$-3DNF} \\[-6pt]
\rule{\textwidth}{0.5pt}
\textbf{Instance:} A quantified Boolean formula of form
\[
\setlength{\abovedisplayskip}{2pt}
\setlength{\belowdisplayskip}{3pt}
\exists x_1,\dots,x_m \forall y_1,\dots,y_n\:\: \phi(x_1,\dots,x_m,y_1,\dots,y_n),\] 
where $\phi$ is in disjunctive normal form with
\begin{itemize}
	\itemsep0em
	\item each disjunct containing 2 or 3 literals,
	\item each $x$-variable occurring exactly once positive and once negative
	\item each $y$-variable occurring exactly three times, and at least once positively and at least once negatively.
\end{itemize}\vspace{-2pt}
\textbf{Question:} Is the formula true? \\[-6pt]
\rule{\textwidth}{0.8pt}

\noindent
We may further insist that every disjunct contain at most 2 $x$-literals, because a disjunct containing only $x$-literals makes the formula trivially true.

\begin{proposition}
	The problem \textup{RESTRICTED TRUE ${\exists\forall}$-3DNF} is $\Sigma_2^p$-complete.
\end{proposition}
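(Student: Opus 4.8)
The plan is to prove membership and hardness separately. Membership in $\Sigma_2^p$ is immediate: RESTRICTED TRUE $\exists\forall$-3DNF is a syntactic restriction of TRUE $\exists\forall$-3DNF, which lies in $\Sigma_2^p$ by Stockmeyer, so the same alternating guess-and-check procedure works (nondeterministically guess the assignment to $x_1,\dots,x_m$, then make a single co-NP query to verify that $\phi$ holds for all $y_1,\dots,y_n$, the negation of which is just a SAT instance on a 3CNF). All the real work is in the hardness direction, which I would establish by a polynomial-time reduction from the unrestricted problem TRUE $\exists\forall$-3DNF.

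The core idea is the standard occurrence-reduction technique: replace each over-occurring variable by several fresh copies, one per occurrence, and append gadget disjuncts that tie the copies together. The two quantifier blocks must be treated asymmetrically. For a universal variable $y$ occurring $k\ge 2$ times I would introduce copies $y^1,\dots,y^k$ and add the ``difference-detecting'' disjuncts $(y^j \wedge \neg y^{j+1})$ for $j=1,\dots,k$ cyclically (so $y^{k+1}=y^1$). Each such disjunct has $2$ literals and no $x$-literal, and it becomes \emph{true} whenever two consecutive copies disagree; since the universal (adversarial) player is trying to falsify the formula, she is forced to set all copies of $y$ equal, at which point the gadget vanishes and the copies behave as the single original variable. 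A pleasant bookkeeping bonus is that in this cyclic gadget each copy $y^j$ occurs exactly once positively (in disjunct $j$) and once negatively (in disjunct $j-1$), so together with its single ``home'' occurrence in the body it occurs exactly three times with both signs present---precisely the required profile.

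The genuinely delicate part, and the main obstacle, is forcing the copies of an \emph{existential} variable to agree. Here the analogous trick fails: the existential player is trying to make $\phi$ a tautology over $y$, and merely \emph{adding} disjuncts can only enlarge the set of covered assignments, so added disjuncts can never penalise an inconsistent choice of copies. Indeed, inconsistent copies can spuriously turn $\phi$ into a tautology: for instance the false formula $\exists x\,\forall y\,[(x\wedge y)\vee(\neg x\wedge\neg y)]$ becomes true once the two occurrences of $x$ are decoupled. To genuinely punish inconsistency one must instead \emph{remove} coverage when copies disagree, which means the gadget has to introduce fresh \emph{universal} ``checker'' variables that create an uncoverable assignment precisely when two copies of an $x$-variable take different values. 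Here I would exploit the normalisation justified by the remark preceding the proposition---that no disjunct is all-$x$, hence every disjunct contains at least one universal literal---to route the relevant coverage through the checkers, and I would engineer the gadget so that each checker variable itself occurs exactly three times with both signs, while each existential copy occurs exactly once positively and once negatively.

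Finally I would carry out the routine padding needed to turn ``at most'' into ``exactly'': any variable still short of its quota is brought up to it by adjoining a few harmless disjuncts (and, if necessary, dummy variables) chosen so as not to change the truth value of the quantified formula, and I would check that every disjunct retains $2$ or $3$ literals and at most two $x$-literals and that no all-$x$ disjunct is created. The whole construction is clearly polynomial, and correctness follows by arguing block by block that each family of gadgets leaves the $\exists\forall$ outcome unchanged. I expect the existential consistency gadget, together with the simultaneous satisfaction of all three exact occurrence and sign constraints, to absorb essentially all of the effort.
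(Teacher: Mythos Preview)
Your treatment of the universal $y$-variables is correct and is, up to the DNF/CNF duality, exactly what the paper does. The gap is in your handling of the existential $x$-variables. You correctly diagnose that adding disjuncts cannot punish inconsistent existential copies, and you correctly intuit that fresh universal variables are needed, but you never give a concrete gadget; and your phrasing (``route the relevant coverage through the checkers'', ``each existential copy occurs exactly once positively and once negatively'') indicates you are still planning to keep several \emph{existential} copies of $x_i$ and police their consistency with auxiliary universals. That framing is what makes the problem look hard, and it is not how the paper proceeds.

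The paper's move is much simpler than what you anticipate, and it dissolves the asymmetry entirely. Pass to the complementary view: a true $\exists x\,\forall y$-3DNF instance is precisely a \emph{false} $\forall x\,\exists y$-3CNF instance. In CNF, adding conjuncts is restrictive, so a wheel of implications works uniformly for \emph{both} variable types. Concretely, for each (now universal) $x_i$, relabel all its body occurrences by fresh \emph{existentially} quantified variables $y_i^1,\dots,y_i^{n_i}$, keep $x_i$ itself, and add the cycle $(x_i\to y_i^1)\wedge(y_i^1\to y_i^2)\wedge\cdots\wedge(y_i^{n_i}\to x_i)$; do the analogous wheel for each original $y$-variable. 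Translated back into your DNF language, the $x$-gadget reads: replace every body occurrence of $x_i$ by a fresh \emph{universal} variable and append the 2-literal disjuncts $(x_i\wedge\neg y_i^1),(y_i^1\wedge\neg y_i^2),\dots,(y_i^{n_i}\wedge\neg x_i)$. There is then only \emph{one} existential variable per original $x_i$---the original itself---and it appears exactly once positively and once negatively, both inside the gadget; every ``copy'' is universal and appears exactly three times (once in the body, once positively and once negatively in the wheel). No padding step, no appeal to the ``no all-$x$ disjunct'' normalisation, and no separate checker-routing argument is needed: the occurrence counts and sign constraints fall out exactly.
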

\begin{proof}
	Membership in $\Sigma_2^p$ is clear.
	
	Let us note that a true \textsc{\scalebox{0.83}{$\exists\hspace{1pt}\forall$}-\scalebox{0.83}{3}dnf}-formula is the same thing as a \textit{false} \scalebox{0.83}{$\forall\exists$}-\textsc{\scalebox{0.83}{3}cnf}-formula; we will use this latter view since CNF formulas are more familiar. Thus, we may reduce from the unrestricted problem \textsc{false-\scalebox{0.83}{$\forall\exists$}-\scalebox{0.83}{3}cnf}. We will in polynomial time transform a given such formula \[\forall x_1,\dots,x_m \exists y_1,\dots,y_n\:\: \phi(x_1,\dots,x_m,y_1,\dots,y_n)\] into a formula of equal truth value which is restricted as above, establishing hardness of the restricted variant.
	
	First, by unit propagation, we may assume that no clause contains only a single literal. Next, for each $x$-variable $x_i$, relabel all its occurrences as $y_i^1,\dots,y_i^{n_i}$ where $n_i$ is the number of occurrences of $x_i$, and the $y_i^r$ are new variables. Existentially quantify over these new variables, keeping $x_i$ universally quantified. Add clauses $(x_i\to y_i^1) \land (y_i^1 \to y_i^2) \land\cdots\land(y_i^{n_i} \to x_i)$ to force all copies to have the same truth value. Similarly, for each old $y$-variable, we relabel all its occurrences (existentially quantifying) and add a `wheel of implications' for them as well (here we may discard the old $y$-variable in the process). The resulting formula satisfies the restrictions and has the same truth value as the original formula.
\end{proof}

By using the techniques of Berman et al.\ \citet{Berman2003}, we can similarly prove that the problem remains $\Sigma_2^p$-complete if disjuncts are required to contain exactly 3 distinct literals, each $x$-literal occurs exactly once, and each $y$-literal occurs exactly twice. One can also show that the problem remains $\Sigma_2^p$-complete if every clause contains at most one $x$-literal \cite[Theorem 10]{ko1995complexity}. For the reductions in this paper, we do not need these other restrictions. 

\section{Strict Core for Boolean Hedonic Games}

Our first hardness result concerns Boolean hedonic games, as introduced by Aziz et al.\ \citet{Booleanhedonicgames2014}. While for this type of hedonic game, the core is always guaranteed to exist, the \emph{strict} core is more difficult to handle.

\begin{theorem}
	The problem ``does a given Boolean hedonic game admit a strict-core-stable partition?'' is $\Sigma_2^p$-complete.
\end{theorem}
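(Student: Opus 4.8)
Membership in $\Sigma_2^p$ is immediate: guess a partition $\pi$, then use an NP-oracle to verify that no weakly blocking coalition exists (checking whether a candidate coalition $S$ weakly blocks reduces to evaluating the goal formulas, which is polynomial). The substance is hardness, which I would establish by reduction from RESTRICTED TRUE $\exists\forall$-3DNF, proved $\Sigma_2^p$-complete in the Proposition above. So fix an instance $\exists x_1,\dots,x_m\,\forall y_1,\dots,y_n\,\phi$ with $\phi$ in the restricted DNF form.

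The guiding intuition is that the existentially quantified $x$-variables should be encoded by a genuine choice the ``stabilizing'' partition is forced to make, while the universally quantified $y$-variables must be captured by the nonexistence of a weak block: a potential weakly blocking coalition should correspond to an assignment to the $y$'s that (together with the fixed $x$-choice) falsifies $\phi$, i.e. satisfies the CNF $\lnot\phi$. Concretely, my plan is to introduce one pair of agents for each $x_i$, say $a_i$ and $\bar a_i$, whose goal formulas are arranged so that in any individually rational, core-stable partition exactly one of a set of ``truth-witness'' coalitions forms, thereby reading off a truth value for $x_i$. I would then introduce a gadget of agents (one per $y$-variable, plus auxiliary ``clause'' agents) whose approved coalitions are designed so that a weakly blocking coalition can form precisely when some $y$-assignment makes every disjunct of $\phi$ false. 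The sparsity of variable occurrences guaranteed by the Proposition (each $y$ occurring exactly three times, with both polarities present) is what lets each agent's goal formula be written over a bounded number of atoms, keeping the construction polynomial and local.

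The delicate point is that we need the \emph{strict} core rather than the core, and indifferences must be exploited rather than fought. The trick is to build the gadget so that a deviating coalition corresponding to a falsifying $y$-assignment contains some agents who are \emph{indifferent} (they are already in an approved coalition under $\pi$ and remain in one) together with at least one agent who \emph{strictly} improves; such a coalition weakly blocks but does not block. This is exactly why the strict core is strictly harder here and why Aziz et al.'s guarantee of core-nonemptiness does not extend. I would design the approved-coalition structure so that, when $\phi$ is true under the chosen $x$-assignment for every $y$-assignment, every candidate deviation fails to even weakly block because at least one of its members is pushed from an approved to a non-approved coalition; whereas if some $y$-assignment falsifies $\phi$, the corresponding coalition weakly blocks every candidate partition. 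Establishing this correspondence in both directions—and checking that the only individually rational, core-stable partitions are the intended ``truth-encoding'' ones—will be the main obstacle, requiring a careful case analysis of which coalitions agents' goal formulas approve.

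Finally, I would verify the two directions of correctness explicitly: if the quantified formula is true, exhibit a strict-core-stable partition by choosing the partition matching the witnessing $x$-assignment and argue no weak block survives; conversely, given a strict-core-stable partition, extract from it an $x$-assignment and argue that no falsifying $y$-assignment can exist, for otherwise the associated coalition would weakly block. Since goal formulas are ordinary propositional formulas over the agent set, the entire reduction is clearly polynomial-time computable, completing the proof.
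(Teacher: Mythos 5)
Your plan correctly identifies the source problem, the intended correspondence (stable partition $\leftrightarrow$ $x$-assignment, weakly blocking coalition $\leftrightarrow$ falsifying $y$-assignment), and the role of indifference in separating the strict core from the core. But the proposal stops exactly where the mathematical content begins: you never write down the agents' goal formulas, and you yourself flag the correspondence argument as ``the main obstacle'' without overcoming it. As it stands this is an outline of the paper's approach, not a proof.

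Concretely, three mechanisms are missing and none of them is routine. First, a weakly blocking coalition must certify that \emph{every} clause of $\lnot\phi$ is satisfied, not just one; the paper achieves this by chaining the clause agents, giving $c_k$ the goal $\lnot\varphi \lor ((\ell_1^k\lor\ell_2^k\lor\ell_3^k)\land c_{k+1})$, so that once the chain is entered each clause agent drags in the next one together with a witnessing literal. Second, a weak block needs exactly one strictly improving agent among otherwise indifferent ones; the paper supplies this with a single formula agent $\varphi$ (goal $c_1\land(x_1\lor\overline{x}_1)$) who is provably unhappy in every candidate stable partition, so that any coalition satisfying $\varphi$ weakly blocks. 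Your plan does not say where the strict improvement comes from. Third, there is a sign inversion you do not anticipate: the literal agents set \emph{true} must be the ones whose goals are \emph{not} satisfied in $\pi$ (so they join a deviation at no cost, being indifferent among all non-approved coalitions), while false literals are kept satisfied via auxiliary agents $t_i,f_i$ so that they refuse to deviate; and one must separately argue that in an arbitrary strict-core-stable partition exactly one of $x_i,\overline{x}_i$ is satisfied (otherwise $\{t_i,x_i,\overline{x}_i\}$ weakly blocks), which is what lets you extract an assignment in the converse direction. Without these gadgets and the accompanying case analysis, neither direction of the equivalence is established.
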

\begin{proof}
	Membership in $\Sigma_2^p$ is clear, since we are asking: does there \textit{exist} a partition such that \textit{for all} coalitions $S$, $S$ does not block?
	
	For hardness, we reduce from RESTRICTED TRUE ${\exists\forall}$-3DNF. Let $\varphi = \exists x\forall y \phi$ be an instance of this problem, where $x=(x_i)$ and $y=(y_j)$ denote vectors of variables. We rewrite $\varphi$ as $\exists x (\lnot \exists y \lnot \phi)$. Note that $\lnot \phi$ is a 3CNF formula, and when below we talk about clauses, we are always referring to clauses of $\lnot\phi$. In the hedonic game which we construct below, a strict-core-stable partition corresponds to an assignment to the $x$-variables. If there is a $y$-assignment satisfying $\lnot \phi$  (so that $\varphi$ is false), this will form a weakly blocking coalition, and conversely, such a coalition induces a satisfying assignment. If the formula is true, such a blocking coalition cannot exist.
	
	For our construction, we take the following agents:
	\begin{itemize}
		\item For each $x_i$, four agents $x_i$, $\overline x_i$, $t_i$, $f_i$.
		\item For each $y_j$, two agents $y_j$, $\overline y_j$.
		\item For each clause $c_k$ in $\lnot\phi$, one agent $c_k$.
		\item A single player $\varphi$ representing the formula.
	\end{itemize}
	We now specify agents' goals. For a clause $c_k$ of $\lnot\phi$, we let $\ell_1^k,\ell_2^k,\ell_3^k$ denote the agents corresponding to the literals occurring in it. For example, if clause $c_k$ is $(x_1 \lor \lnot y_2 \lor y_3)$, then $\ell_1^k$ refers to agent $x_1$, $\ell_2^k$ refers to $\overline y_2$, and $\ell_3^k$ refers to $y_3$. If a clause only contains 2 literals, just let $\ell_2^k = \ell_3^k$.
	\begin{itemize}
		\item $x_i: f_i \land \lnot t_i \land \overline x_i$
		\item $\overline x_i: t_i \land \lnot f_i \land x_i$
		\item $t_i: \lnot\varphi$
		\item $f_i: \lnot\varphi$
		\item $y_j: \lnot \overline y_j$
		\item $\overline y_j: \lnot y_j$
		\item $c_k: \lnot\varphi \lor ((\ell_1^k \lor \ell_2^k \lor \ell_3^k)\land c_{k+1})$, or \\
		$c_k: \lnot\varphi \lor (\ell_1^k \lor \ell_2^k \lor \ell_3^k)$ if $c_{k+1}$ does not exist
		\item $\varphi: c_1\land (x_1\lor \overline x_1)$
	\end{itemize}
	This hedonic game has a strict-core-stable outcome if and only if $\varphi$ is true.
	
	$\impliedby$: Suppose $\varphi$ is true. Take an assignment $\mathcal A$ to the $x$-variables certifying truth of $\varphi$. Then take the partition $\pi$ with coalitions $\{t_i, x_i, \overline x_i\}$ for true $x_i$, with coalitions $\{f_i, x_i, \overline x_i\}$ for false $x_i$, and singleton coalitions for all other players. We show that $\pi$ is strict-core-stable. 
	
	Most agents' goals are satisfied in $\pi$, except for true $x$-literals and the player $\varphi$. If $\pi$ is not stable, then there is a weakly blocking coalition $S$ including a player whose goal is satisfied in $S$ but not in $\pi$; also, no other player in $S$ can be worse off in $S$ than in $\pi$. Now the profiting player cannot be a true $x$-literal, for if this player were to gain then its complementary literal must be part of $S$ and this literal would lose, because complementary literals have incompatible goals. Hence any weakly blocking coalition $S$ must include the $\varphi$-player and must satisfy it. Looking at $\varphi$'s goal, this means that $c_1\in S$. Indeed, by induction, $c_k\in S$ for all $c_k$, since $c_k\in S$ cannot be worse off in $S$, and thus $c_k$'s goal must be satisfied, which requires $c_{k+1}\in S$. 
	
	Since every $c_k$ is satisfied in $\pi$, the goal of every $c_k$ is also satisfied in $S$. This means that the literals present in $S$ must satisfy each clause of $\lnot\phi$. Thus, the literals present in $S$ satisfy $\lnot \phi$. This contradicts truth of $\varphi$ under the assignment $\mathcal A$ once we can show that $S$ does not contain complementary $y$-literals, and only contains $x$-literals that are true in $\mathcal A$. But both of these requirements are easy to see: since all $y$-literals have their goal satisfied in $\pi$, they must have their goal satisfied in $S$, which means their complementary literal is not part of $S$. Also, a false $x$-literal is happy in $\pi$ but would be unhappy in $S$ since $t_i,f_i\not\in S$ (because they hate $\varphi$), and thus false $x$-literals are not part of the weakly blocking $S$. Hence $\pi$ is strict-core-stable.
	
	$\implies$: Suppose the game has a strict-core-stable outcome $\pi$. We show that $\varphi$ is true. 
	First we will find an appropriate assignment to the $x$-variables. Fix some variable $x_i$. Since the goals of $x_i$ and $\overline x_i$ are incompatible, at most one of them is happy in $\pi$. If both are unhappy, then $\{t_i,x_i,\overline x_i\}$ weakly blocks. Hence for each $x_i$, exactly one of $x_i$ and $\overline x_i$ has their goal satisfied. Define the assignment that sets that literal true which is \textit{not} satisfied.
	
	Soon we will need to know that the $\varphi$ player does not have its goal satisfied in $\pi$. For a contradiction suppose it does. Then $x_1\in\pi(\varphi)$ or $\overline x_1 \in \pi(\varphi)$. Since $x_1$ or $\overline x_1$ has their goal satisfied in $\pi$, both of them are together with either $t_1$ or $f_1$. So either $t_1\in\pi(\varphi)$ or $f_1\in\pi(\varphi)$; but then $\{t_1\}$ or $\{f_1\}$ blocks $\pi$, a contradiction.
	
	Now take an arbitrary assignment to the $y$-variables, and suppose for a contradiction that under these assignments to the $x$- and $y$-variables the formula $\phi$ becomes false, so that $\lnot \phi$ becomes true so every clause is true. Let $S$ be the coalition consisting of player $\varphi$, all clauses $c_k$, all true $x$-literals, and all true $y$-literals. In $S$, every player except for the true $x$-literals is satisfied, so no player is worse off. However $\varphi$ did not have its goal satisfied in $\pi$, so is strictly better off in $S$, and hence $S$ weakly blocks $\pi$, a contradiction. Thus, $\varphi$ must be true.
\end{proof}
Hardness holds even if every agent mentions at most 5 other agents in their goal. By rewriting the formulas, we can see that hardness also holds even if goals are given in 3-DNF or 4-CNF.

\section{Core and Strict Core for Additive Hedonic Games}

The structure of our reduction for Boolean hedonic games can be adapted to work in the additive case. The resulting reduction is necessarily less straightforward, because we have to simulate the clausal structure using additive valuations. On the other hand, the resulting reduction works for \emph{both} the core and the strict core. Further, this is the first hardness reduction for additive hedonic games that applies even to ``sparse'' games, where players assign non-zero valuations to only at most a fixed number of other players.

\begin{theorem}
	\label{thm:additive}
	The problem ``does a given additive hedonic game admit a strict-core-stable partition?'' is $\Sigma_2^p$-complete. The same question for the core is also $\Sigma_2^p$-complete, and both problems remain hard even for symmetric utilities that only assign non-zero values to at most 10 other players.
\end{theorem}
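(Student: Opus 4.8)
Membership in $\Sigma_2^p$ follows exactly as before: guess a partition and use an NP-oracle to check for a (weakly) blocking coalition. For hardness I again reduce from RESTRICTED TRUE $\exists\forall$-3DNF, viewing an instance $\exists x\,\forall y\,\phi$ as a \emph{false} $\forall\exists$-3CNF formula so that $\lnot\phi$ supplies the clauses. The plan is to reuse the agents of the Boolean construction—the literal agents $x_i,\overline x_i,y_j,\overline y_j$, the auxiliary agents $t_i,f_i$, one agent per clause $c_k$, and the formula agent $\varphi$—but to replace every goal formula by additively separable valuations that reproduce the same incentives. Since additive utilities are merely linear, the key device will be a \emph{hierarchy of weight scales}: very large symmetric weights $\pm M$ to encode the ``hard'' constraints (exclusivity of complementary literals, aversion of $t_i,f_i$ to $\varphi$, mutual exclusion of $y_j$ and $\overline y_j$), and much smaller weights to encode the ``soft'' clausal structure.

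The heart of the reduction is simulating the clause chain with thresholds. I would have $\varphi$ value $c_1$ positively and each clause agent $c_k$ value its successor $c_{k+1}$, so that any coalition satisfying $\varphi$ is forced, by an induction identical to the Boolean proof, to contain every $c_k$. To force each clause to be \emph{satisfied}, I give $c_k$ a negative baseline (for instance a negative value for $\varphi$, propagated along the chain) that can be offset exactly by the presence of at least one of its $\le 3$ literal agents, each valued by $c_k$ at a small positive amount. Calibrating these small weights so that one present literal suffices to make $c_k$ willing to join, while no literal is worth bringing in on its own account, is what encodes the disjunction. The candidate partition $\pi$ again places $\{t_i,x_i,\overline x_i\}$ or $\{f_i,x_i,\overline x_i\}$ according to the $x$-assignment and leaves everyone else a singleton, and the putative deviation is the coalition $S$ of $\varphi$, all clauses, and all true literals.

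The main obstacle is \emph{symmetry}. Every weight used to drive the chain outward from $\varphi$ and $c_k$ induces, through $v_i(j)=v_j(i)$, a reverse valuation on the literal and gadget agents, and these reverse valuations threaten to create spurious blocking coalitions or to destabilise $\pi$. The delicate work is to separate the weight scales enough that each backward valuation is dominated by the large ``hard'' weights already pinning down an agent's behaviour, so that no unintended deviation is ever profitable, while the small clausal weights still control the chain. Sparsity then follows from the restricted problem: because each $x$-variable occurs only twice and each $y$-variable only three times, every literal agent interacts with a constant number of clause agents, and every clause agent with its $\le 3$ literals, its two chain neighbours, and $\varphi$; a careful count keeps all degrees at most $10$.

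Finally, to obtain hardness for the core \emph{and} the strict core from a single construction, I would exploit that additive valuations let tiny perturbations break ties. The two statements I aim for are: (i) if $\varphi$ is true then the specific $\pi$ admits no \emph{weakly} blocking coalition, so it is strict-core-stable and hence core-stable; and (ii) if $\varphi$ is false then \emph{every} partition admits a \emph{strictly} blocking coalition, so neither the core nor the strict core is non-empty. Statement (i) mirrors the Boolean proof with ``goal satisfied'' replaced by ``utility at least that in $\pi$'', while for (ii) I would tune the weights so that the coalition $S$ above is not merely weakly but strictly blocking—every member, including the true literals that were only indifferent in the Boolean case, strictly improves. Reconciling this strict improvement for the literals on the ``false'' side with the absence of any weak block of $\pi$ on the ``true'' side is precisely the tension that the weight hierarchy must resolve.
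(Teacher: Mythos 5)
Your high-level plan matches the paper's: reduce from the same restricted QBF problem, simulate the Boolean clause chain with additive thresholds, and use one symmetric construction for both the core and the strict core. But as written the proposal is a plan rather than a proof, and the pieces you leave as ``calibration'' are exactly where the real work lies. The central missing idea is a concrete mechanism that forces every clause agent to clear a utility threshold in \emph{any} stable partition. The paper does this with \emph{helper players}: each $c_k$ (and each $t_i,f_i$) gets a partner $c_k'$ with mutual value $30$, where $c_k'$ hates everything else $c_k$ likes; since $\{c_k,c_k'\}$ always blocks otherwise, every coalition of a stable partition must give $c_k$ at least $30$, and the clause agents are arranged in a \emph{cycle} with neighbour-values $13$ and literal-values $5$, so that reaching $30$ without the helper forces both neighbours plus a satisfying literal ($13+13+5$), which propagates around the whole cycle. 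Your substitute --- a negative baseline on $\varphi$ ``propagated along the chain'' --- is not worked out: a positive value for $c_{k+1}$ does not force $c_{k+1}$'s presence (in the Boolean game this was forced by the goal formula, not by preference intensity), your chain has a last element with no successor to supply the deficit, and a negative value for $\varphi$ imposes nothing on coalitions that omit $\varphi$. Note also that the paper drops the $\varphi$ agent entirely; keeping it creates symmetric back-edges from $c_1$ and $x_1,\overline x_1$ to $\varphi$ that you would additionally have to neutralise.

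The second substantive gap is the direction ``formula false $\Rightarrow$ core empty.'' You only discuss the canonical partition and the canonical deviating coalition $S$, but this direction requires extracting a consistent $x$-assignment from an \emph{arbitrary} core-stable partition and showing that most agents are pinned to low utility there (so that $S$ blocks \emph{strictly}). The paper does this via a case analysis on whether $x_i$ sits with its clause agent, with $t_i$, or with $f_i$, plus an extra dummy variable $x_*$ and clause $(x_*\lor\overline x_*)$ inserted precisely so that no clause agent can exceed utility $30$ and no $y$-literal can exceed $0$ in a stable partition. Finally, the ``tension'' you name at the end --- making $S$ strictly improving for true literals while the canonical partition admits no weak block --- is resolved in the paper by concrete numbers (a true $x$-literal gets $14-10=4$ with its triangle but $5$ with its clause agent); identifying the tension is not the same as resolving it. So the proposal follows the paper's strategy but omits the gadgets (helpers, the clause cycle, the $x_*$ clause) and the arbitrary-partition analysis that make the reduction actually go through.
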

\begin{proof}
	Membership in $\Sigma_2^p$ is clear, since we are asking: does there \textit{exist} a partition such that \textit{for all} coalitions $S$, $S$ does not block?
	
	For hardness, we reduce from RESTRICTED TRUE ${\exists\forall}$-3DNF. Let $\varphi = \exists x\forall y \phi$ be an instance of this problem, where $x=(x_i)$ and $y=(y_j)$ denote vectors of variables. We rewrite $\varphi$ as $\exists x (\lnot \exists y \lnot \phi)$. Note that $\lnot \phi$ is a 3CNF formula, and when below we talk about clauses, we are always referring to clauses of $\lnot\phi$. In the hedonic game which we construct below, a (strict-)core-stable partition corresponds to an assignment to the $x$-variables. If there is a $y$-assignment satisfying $\lnot \phi$, this will form a blocking coalition, and conversely, such a coalition induces a satisfying assignment. If the formula $\varphi$ is true under the assignment to the $x$-variables, such a blocking coalition cannot exist.
	
	Before we start, let us add a new $x$-variable (call it~$x_*$) to the formula and add the clause $c_1 := (x_*\lor\overline x_*)$ to $\lnot\phi$. This preserves the truth value of the formula, and preserves the restrictions of the input problem. This will help in the proof of implication (iii) $\Rightarrow$ (i) later.
	
	We take the following agents:
	\begin{itemize}
		\item For each $x_i$, four agents $x_i$, $\overline x_i$, $t_i$, $f_i$.
		\item For each $y_j$, two agents $y_j$, $\overline y_j$.
		\item For each clause $c_k$ in $\lnot\phi$, one agent $c_k$.
		\item A helper player $c_k'$ for each $c_k$, and helper players $t_i'$ and $f_i'$ for each $t_i$ and $f_i$.
	\end{itemize}
	We say that a player $p$ who has a helper player $p'$ is \emph{supported}. The purpose of the helper players will be to guarantee that supported players obtain utility at least 30 in stable outcomes.

	In Table~\ref{table:valuations}, we specify agents' symmetric utilities, see also Figures~\ref{fig:clause} and~\ref{fig:variable}. All utilities not specified in the table are 0; the figures also omit $-\infty$ valuations for clarity. As notation, for a literal $\ell$, we let $c(\ell)$ denote the clauses that contain $\ell$; for $x$-literals there is just one such clause, but there are up to two for $y$-literals. We also let $\ell(c)$ denote the literals occurring in $c$. We take arithmetic in subscripts of clause names (as in `$c_{k+1}$') to be modulo the number of clauses in $\lnot\phi$.
	\begin{table}[t]
	\centering
	\begin{tabular}{l l r}
		\toprule
		$a$ & $b$ & $v_a(b)$  \\
		\midrule
		$x_i$ & $\overline x_i$ & $-10$ \\
		& $f_i$ & 20 \\
		& $t_i$ & 14 \\
		& $c(x_i)$ & 5 \\
		& $f_i'$ & $-\infty$ \\
		& $t_i'$ & $-\infty$  \\
		& $c'(x_i)$ & $-\infty$ \\
		\midrule
		$t_i$ & $x_i$ & 14 \\
		& $\overline x_i$ & 20  \\
		& $t_i'$ & 30 \\
		& $f_i$ & $-\infty$ \\
		& $c(x_i)$ & $-\infty$ \\
		\bottomrule \\
	\end{tabular}\qquad%
\begin{tabular}{l l r}
	\toprule
	$a$ & $b$ & $v_a(b)$  \\
	\midrule
	$\overline x_i$ & $x_i$ & $-10$ \\
	 & $f_i$ & 14 \\
	& $t_i$ & 20 \\
	& $c(\overline x_i)$ & 5 \\
	& $f_i'$ & $-\infty$ \\
	& $t_i'$ & $-\infty$ \\
	& $c'(\overline x_i)$ & $-\infty$ \\
	\midrule
	$f_i$ & $x_i$ & 20 \\
	& $\overline x_i$ & 14 \\
	& $f_i'$ & 30 \\
	& $t_i$ & $-\infty$ \\
	& $c(\overline x_i)$ & $-\infty$ \\
	\bottomrule \\
\end{tabular}\qquad%
\begin{tabular}{l l r}
	\toprule
	$a$ & $b$ & $v_a(b)$  \\
	\midrule
	$y_j$ & $\overline y_j$ & $-\infty$ \\
	& $c(y_j)$ & $5$ \\
	& $c'(y_j)$ & $-\infty$ \\
	\midrule
	$c_k$ & $c_{k-1}$ & 13 \\
	& $c_{k+1}$ & 13 \\
	& $\ell$(c) & 5 \\
	& $c_k'$ & 30 \\
	& $t_i/f_i$ & $-\infty$ \\
	& $c_{k-1}'$ & $-\infty$  \\
	& $c_{k+1}'$ & $-\infty$ \\
	\bottomrule \\
	&&\\
	&&\\
\end{tabular}\qquad%
\begin{tabular}{l l r}
	\toprule
	$a$ & $b$ & $v_a(b)$  \\
	\midrule
	$\overline y_j$ & $y_j$ & $-\infty$ \\
	& $c(\overline y_j)$ & $5$ \\
	& $c'(\overline y_j)$ & $-\infty$ \\
	\midrule
	$c_k'$ & $c_k$ & 30 \\
	& $\ell(c_k)$ & $-\infty$ \\
	& $c_{k-1}$ & $-\infty$ \\
	& $c_{k+1}$ & $-\infty$ \\
	\midrule
	$t_i'$ & $t_i$ & 30 \\
	& $x_i$, $\overline x_i$ & $-\infty$ \\
	\midrule
	$f_i'$ & $f_i$ & 30 \\
	& $x_i$, $\overline x_i$ & $-\infty$ \\
	\bottomrule \\
\end{tabular}
	\caption{The agent valuations $v_a(b)$. All values not specified are 0. The value ``$-\infty$'' denotes any sufficiently large negative number; $-100$ will do. Notice that the valuations are symmetric ($v_a(b) = v_b(a)$) and every agent specifies at most 10 non-zero values, noting that we can ensure that no clause contains more than 2 $x$-literals.}
	\label{table:valuations}
	\end{table}
	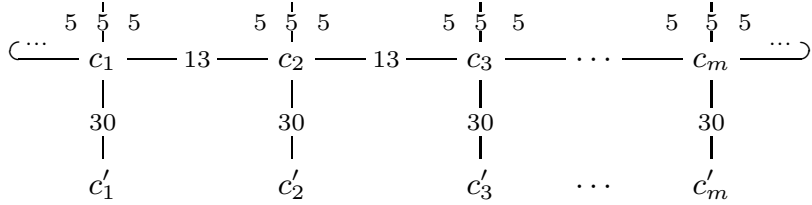
\begin{figure}[t]
		\centering
		\scalebox{1.3}{
		\xymatrix@C-0.55pc{
			&&&&&&&& \\
			\ar@{*{\UseTips\dir^{(}}-}[r]^{\cdots\quad\:}
			& c_1 \ar@{-}[rr]|{\lab{13}} 
			\ar@{}[ul]^(.07){}="a"^(.6){}="b" \ar@{-}|{\labs{5}} "a";"b"
			\ar@{}[u]^(.07){}="a"^(.6){}="b" \ar@{-}|{\labs{5}} "a";"b"
			\ar@{}[ur]^(.07){}="a"^(.6){}="b" \ar@{-}|{\labs{5}} "a";"b"
			 && c_2 \ar@{-}[rr]|{\lab{13}} 
			 \ar@{}[ul]^(.07){}="a"^(.6){}="b" \ar@{-}|{\labs{5}} "a";"b"
			 \ar@{}[u]^(.07){}="a"^(.6){}="b" \ar@{-}|{\labs{5}} "a";"b"
			 \ar@{}[ur]^(.07){}="a"^(.6){}="b" \ar@{-}|{\labs{5}} "a";"b"
			 && c_3 \ar@{-}[r]
			 \ar@{}[ul]^(.07){}="a"^(.6){}="b" \ar@{-}|{\labs{5}} "a";"b"
			 \ar@{}[u]^(.07){}="a"^(.6){}="b" \ar@{-}|{\labs{5}} "a";"b"
			 \ar@{}[ur]^(.07){}="a"^(.6){}="b" \ar@{-}|{\labs{5}} "a";"b"
			 & \cdots \ar@{-}[r] & 
			 c_m
			 \ar@{}[ul]^(.07){}="a"^(.6){}="b" \ar@{-}|{\labs{5}} "a";"b"
			 \ar@{}[u]^(.07){}="a"^(.6){}="b" \ar@{-}|{\labs{5}} "a";"b"
			 \ar@{}[ur]^(.07){}="a"^(.6){}="b" \ar@{-}|{\labs{5}} "a";"b" 
			 \ar@{-*{\UseTips\dir^{)}}}[r]^{\quad\:\cdots} & \\
			&c_1' \ar@{-}[u]|{\lab{30}} && c_2' \ar@{-}[u]|{\lab{30}} && c_3' \ar@{-}[u]|{\lab{30}} & \cdots &  c_m' \ar@{-}[u]|{\lab{30}}
			}
		}
		\caption{The clause gadgets. Clauses are arranged in a cycle. We will consider a partition $\pi$ where each clause agent $c_k$ is in a pair with its helper $c_k'$. If the formula is false, though, all the clauses join forces and can deviate together with a falsifying selection of $y$-literals (and of true $x$-literals).}
		\label{fig:clause}
	\end{figure}
	
	\begin{figure}[t]
		\centering
		\scalebox{1.1}{\xymatrix@R-0.3pc@L=10pt{
			& & t_i' \ar@{-}[d]|{\lab{30}} & & \\
			& & t_i \ar@{-}[dl]|{\lab{14}} \ar@{-}[dr]|{\lab{20}} & & \\
			c(x_i)
			& \ar@{-}[l]|{\lab{5}} x_i \ar@{-}[rr]|{\:\:-10\:\:} 
			& & \overline x_i \ar@{-}[r]|{\lab{5}} & c(\overline x_i)  \\
			& & f_i \ar@{-}[ur]|{\lab{14}} \ar@{-}[ul]|{\lab{20}} & & \\
			& & f_i' \ar@{-}[u]|{\lab{30}} & &
		}}
		\caption{The $x$-variable gadget. If the variable is set true, the upper triangle forms a coalition. If it is false, the lower triangle forms a coalition. Note that in this configuration, the true (but not the false) literal is willing to deviate with the connected clause agent.}
		\label{fig:variable}
	\end{figure}
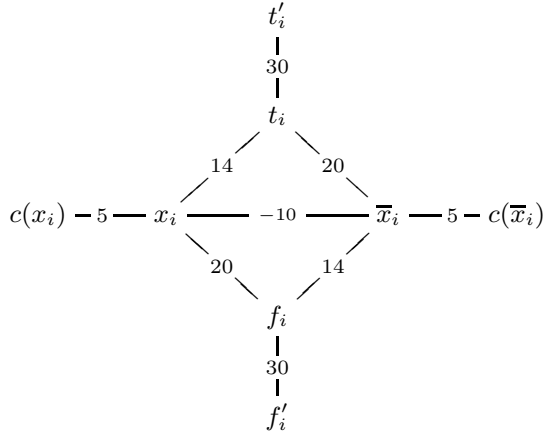
	
	\noindent
	We show that the following are equivalent:
	\begin{enumerate}[(i)]
		\item The input formula is true.
		\item The game admits a strict-core-stable partition.
		\item The game admits a core-stable partition.
	\end{enumerate}
	
	Call a coalition $S$ \textit{feasible} if it is individually rational (and in particular does not contain players who evaluate each other as $-\infty$), and if each `supported' player (those that have a helper player: $c_k/t_i/f_i$) obtains utility $\ge 30$ in $S$. Call a coalition \textit{infeasible} otherwise. Observe that all coalitions in a core-stable partition must be feasible (as otherwise it is not individually rational or a coalition like $\{c_k, c_k'\}$ blocks). Observe further that if a partition $\pi$ is weakly blocked by some coalition $S$, then it is weakly blocked by a feasible coalition. This is because if $S$ is not individually rational for player $i$, then $\pi$ is also weakly blocked by the feasible coalition $\{i\}$ or $\{i, i'\}$, and if $S$ gives less than utility 30 to a supported player $i$, then the feasible coalition $\{i, i'\}$ also weakly blocks $\pi$. These properties of feasible coalitions formalise the intuitive purpose of the helper players: they guarantee a minimum payoff to the supported player in any stable outcome. \\
	
	\noindent
	\textbf{Lemma.} In a feasible coalition $S\ni c_k$, either \\
	(a) $c_k' \in S$ but $c_{k-1}, c_{k+1}, \ell(c_k)\not\in S$, or \\ 
	(b) $S$ contains \textit{all} clause players simultaneously, and for each clause, some literal occurring in it is part of $S$. \vspace{3pt}
	
	\noindent 
	\textit{Proof.} Note that $c_k'$ hates all other players to which $c_k$ assigns positive utility (namely $c_{k-1}, c_{k+1}, \ell(c_k)$), so that if $c_k'\in S$ then those players cannot be in $S$. So suppose $c_k'\not\in S$. We will prove by induction that we are in case (b). By feasibility of $S$, $c_k$ obtains utility at least 30. This is only possible if both $c_{k-1}\in S$ and $c_{k+1}\in S$ (subscripts mod the number of clauses), since the literals connected to $c_k$ only give total utility $3 \cdot 5 = 15$. Thus, $c_{k+1} \in S$. Assume now that $c_k, c_{k+1},\dots,c_{s-1},c_s\in S$ for some $s$. As $c_{s-1}$ and $c_s'$ hate each other, we have $c_s'\not\in S$. But feasibility for $c_s$ then implies that $c_{s+1}\in S$. By induction, all clause players are in $S$. Finally, if some clause player did not have any of its literals in $S$, then it would only obtain utility $26$ in $S$ contradicting feasibility. \qed \vspace{8pt}
	
	\noindent
	Let us note that in case (b), some $x$-literal must be part of $S$; namely at least one $x_*$-literal by feasibility for $c_1$.\\
	
	\noindent
	(i) $\Rightarrow$ (ii): Suppose the input formula is true. Take an assignment $\mathcal A$ of the $x$-variables certifying truth of the formula. Then take the partition $\pi$ with coalitions $\{t_i, x_i, \overline x_i\}, \{t_i'\}, \{f_i,f_i'\}$ for true $x_i$, with coalitions $\{f_i, x_i, \overline x_i\}, \{f_i'\},\{t_i,t_i'\}$ for false $x_i$, with coalitions $\{c_k,c_k'\}$ for each $c_k$, and singleton coalitions $\{y_j\},\{\overline y_j\}$ for $y$-literals. We show that $\pi$ is strict-core-stable. 
	
	Suppose not, and there is a weakly blocking coalition~$S$ which we may assume to be feasible. In $\pi$, the following players are already in a best coalition among feasible ones: false $x$-literals (with utility $10$), $t_i$ and $f_i$ players together with literals (with utility $34$), $c_k'$ (with utility $30$), and paired $t_i', f_i'$ (with utility $30$). Since $S$ is weakly blocking, it contains a player $p$ who \textit{strictly} prefers $S$ to~$\pi$. This $p$ cannot be on the preceding list.
	
	Now $p$ also cannot be a $t_i$ or $f_i$ player together with its helper for it cannot offer its associated literals a feasible coalition as good as they have in $\pi$ (namely, the false literal involved would move from utility $10$ to $4$). Also, $p$ cannot be a singleton $t_i'$ or $f_i'$ as $t_i$ or $f_i$ (currently with utility $34$) will be less happy in feasible coalitions together with $t_i'$ or $f_i'$ (which only offers $30$). Hence, $p$ is either a true $x$-literal, a $y$-literal, or a $c_k$ player. In the former two cases, in order that $p$ strictly gains utility, $S$ must also include a clause player associated with the literal $p$. So in either case $S$ includes a clause player, and by the lemma, since we cannot be in case (a), $S$ must contain \textit{all} clause players, and enough literals to satisfy every clause. This means that the literals present in $S$ satisfy $\lnot \phi$. This contradicts truth of $\varphi$ under assignment $\mathcal A$ once we can show that the literals in $S$ form a valid assignment, i.e., $S$ does not contain complementary $y$-literals, and only contains true $x$-literals. But both of these requirements are easy to see: by feasibility of $S$ no $y$-literal has their complementary literal in $S$ (since they hate each other). Also, a false $x$-literal (currently with utility $10$) would be worse off in $S$ since $t_i,f_i\not\in S$ (by feasibility since $t_i,f_i$ hate a clause) and so the false $x$-literal would obtain at most $5$ in $S$. Thus, false $x$-literals are not part of the weakly blocking $S$. Hence $\pi$ is strict-core-stable. \\
	
	\noindent
	(ii) $\Rightarrow$ (iii): Strict-core-stability implies core-stability. \\
	
	\noindent
	(iii) $\Rightarrow$ (i):  Suppose the game has a core-stable outcome~$\pi$. We show that the formula is true. Recall that every coalition in $\pi$ must be feasible. Write $i\sim j$ for agents $i,j$ that appear in the same coalition in $\pi$ (so $\sim$ is the equivalence relation associated with $\pi$).
	
	First we will find an appropriate assignment for the $x$-variables. Consider variable $x_i$. It cannot be that both $x_i\sim c(x_i)$ and $\overline x_i \sim c(\overline x_i)$ for then by the lemma all clause players are together in $\pi$, which implies $x_i \not\sim t_i, f_i$ (since they hate a clause) and so $x_i$ obtains utility at most $5-10 = -5$, contradicting feasibility. If exactly one of $x_i$ and $\overline x_i$ is together with clause players, say $x_i \sim c(x_i)$, then $\overline x_i$ must obtain utility 0, because feasibility for $f_i$ and $t_i$ implies that they obtain utility at least $30$, which they can only obtain if they are either together with \emph{both} $x_i$ and $\overline x_i$ (but this is impossible because $x_i \not\sim \overline x_i$ by the lemma and our assumption that $\overline x_i \not\sim c(\overline x_i)$) or if they are together with their helper player but not with either $x_i$ or $\overline x_i$.  Thus $\{x_i, \overline x_i, f_i\}$ blocks, a contradiction. 
	The only remaining possibility is that both $x_i\not\sim c(x_i)$ and $\overline x_i\not\sim c(\overline x_i)$. In this case, we can see that core-stability implies either $x_i\sim \overline x_i \sim t_i$ or $x_i\sim \overline x_i \sim f_i$ but not both: having both is infeasible because $t_i$ and $f_i$ hate each other; having neither is unstable because $\{x_i, \overline x_i, f_i\}$ would block. Define the assignment that sets those variables $x_i$ true for which $x_i \sim t_i$, and sets those variables $x_i$ false for which $x_i \sim f_i$.
	
	Soon we will need to know that $\pi$ makes most players not very happy. Indeed, we claim that in $\pi$, clause players $c_k$ obtain utility 30, and $y$-literals obtain utility~0. This is because if any two of these players would obtain strictly more in $\pi$, then case (b) of the lemma applies, so that all clause players are together in $\pi$, and they are together with literals which include an $x_*$-literal. But we already know that $x$-literals are not together with their clause player, a contradiction.
	
	Now take an arbitrary assignment to the $y$-variables, and suppose for a contradiction that under these assignments to the $x$- and $y$-variables the formula $\varphi$ becomes false, so that $\lnot \phi$ becomes true meaning each clause of $\lnot\phi$ is true. Let $S$ be the coalition consisting of all clauses $c_k$, all true $x$-literals, and all true $y$-literals. In $S$, true $x$-literals obtain
	utility $5>4$, 
	$y$-literals obtain positive utility, and clauses obtain utility $\ge 31 = 13+13+5$. Hence everyone in $S$ is strictly better off in $S$ than in $\pi$. Thus $\pi$ is not core-stable, contradiction. Thus, $\varphi$ must be true.
\end{proof}

\section{Conclusions and Open Problems}

We have shown that deciding the existence of a core- or strict-core-stable partition in a given additive hedonic game is $\Sigma_2^p$-complete, and similarly that deciding the existence of a strict-core-stable partition in a Boolean hedonic game is $\Sigma_2^p$-complete. This answers the complexity status of these questions conclusively, and implies that solving them is even harder than solving NP-complete problems (unless the polynomial hierarchy collapses).

The root cause for $\Sigma_2^p$-hardness is that blocking coalitions may be very large. If we consider a solution concept according to which only coalitions of bounded size may block, the \textsc{existence} problem is contained in NP. If the size bound is 2, the problem is identical to the \emph{stable roommates} setting, and in some cases there are known tractability results. For larger size bounds, it is likely that hardness holds. It would be interesting to further explore this variation of the concept of the core.

Our reduction for additive games produced a hardness result that holds even for sparse game that have maximum degree bounded by 10. It would be interesting to decide the complexity of cases where other parameters are small. For example, Ohta et al.\ \citet{ohta2017friends} show that the core remains $\Sigma_2^p$-hard even if agents only assign three different valuation numbers to the other players. Another parameter that may be interesting is the number of agent \emph{types}, that is, the number of different valuation functions appearing in the game (see, e.g., Shrot et al.\ \cite{shrot}). In the framework of graphical hedonic games \citep{peters2016graphical}, it would be interesting to decide the complexity of cases where the underlying graph is planar, or bipartite, or satisfying some other topological constraint. Looking in another direction, our hardness result appears to be the first that applies to sparse games, i.e., games in which agents only assign non-zero valuations to at most a constant number of other agents. It would be interesting to see whether deciding the existence of other solution concepts, such as Nash or individually stable partitions, remains hard for sparse games. More technically, it is likely possible to improve our bound of 10 to a smaller maximum degree.

We also reiterate here a problem posed by Woeginger \citet{WoegingerSurvey}: is strict-core-existence $\Sigma_2^p$-hard even for games based on aversion to enemies? These are additive hedonic games in which $v_i(j)\in\{-\infty,1\}$ for all agents $i,j\in N$. It appears that an entirely different approach is necessary for this setting. Partial progress on this problem is presented by Rey et al.\ \citet{rey2015wonderfully} and by Ohta et al.\ \citet{ohta2017friends}.

\subsubsection*{Acknowledgements.}

I thank the anonymous reviewers for helpful feedback that improved the clarity of presentation, and Lena Schend for useful discussions. I am supported by EPSRC, by ERC under grant number 639945 (ACCORD), and by COST Action IC1205.

\bibliographystyle{splncs03}

\end{document}